\providecommand{\tabularnewline}{\\}
\newtheorem{definition}{Definition}{\bfseries}{\itshape}
\newtheorem{proposition}{Proposition}{\bfseries}{\itshape}
\newtheorem{theorem}{Theorem}{\bfseries}{\itshape}
\title{Conciliating Privacy and Utility in Data Releases via
	Individual Differential Privacy and Microaggregation
}
\author{
  Jordi Soria-Comas, David S\'anchez, Josep Domingo-Ferrer, Sergio Mart\'{\i}nez, Luis Del Vasto-Terrientes \\
  Department of Computer Engineering and Mathematics,\\
  UNESCO Chair in Data Privacy,\\ CYBERCAT-Center for Cybersecurity Research of Catalonia, \\
  Universitat Rovira i Virgili, \\
  Av. Paisos Catalans 26, Tarragona, 43007, Catalonia\\
  \texttt{\{jordi.soria, david.sanchez, josep.domingo, sergio.martinezl\}urv@cat} \\
  luis.delvasto@outlook.com
   \And
}
\begin{document}
\maketitle

\begin{abstract}
$\epsilon$-Differential privacy (DP) is a 
well-known privacy model that offers strong privacy guarantees. However, when applied to data releases, DP significantly deteriorates the 
analytical utility of the protected outcomes. To keep data utility at reasonable levels, practical
applications of DP to data releases have used weak privacy parameters (large $\epsilon$), which 
dilute the privacy guarantees of DP. In this work, we tackle this issue by
using an alternative formulation of the DP privacy 
guarantees, named $\epsilon$-individual differential privacy (iDP), which 
causes less data distortion while providing the same protection as DP to subjects.
We enforce iDP in data releases by relying on attribute masking plus 
a pre-processing step based on data microaggregation. The goal 
of this step is to reduce the sensitivity to record changes, 
which determines the amount of noise required to enforce iDP (and DP). 
Specifically, we propose data microaggregation strategies designed 
for iDP whose sensitivities are significantly lower than those 
used in DP. As a result, we obtain iDP-protected data with significantly better utility than with DP. 
We report on experiments that show how our approach can provide strong privacy 
(small $\epsilon$) while yielding protected data that do not significantly degrade the accuracy of secondary data analysis. 
\end{abstract}

\keywords{Individual differential privacy \and Data releases \and Data microaggregation \and Machine learning}

\section{Introduction}
\label{intro}
Data analysis has become an essential tool in today's world. Its applications 
 range from 
the enhancement of customers' experience ({\em e.g.} via recommender systems) to the support of strategic
decision making ({\em e.g.} using data mining), and in general may substantially
improve human life and human endeavors. 
 However, when using data on people for secondary purposes, the privacy
of the subjects in the data set must be preserved. This is increasingly important under the new stronger privacy regulations, epitomized by 
the European General Data Protection Regulation (GDPR).

Personally identifiable information (PII) should be protected
before releasing or sharing it for analysis. 
Several approaches are possible: one may release
a fixed set of statistics, offer interactive access to the PII 
via some query mechanism that provides anonymized answers, or release 
an anonymized data set. The latter is the most convenient way for the data analyst, and the one that we consider 
in this work~\cite{SDC}. Releasing sufficiently 
useful protected data sets gives freedom to 
the data analyst to carry out unconstrained exploratory 
data analyses, data mining tasks or even machine learning at will. 

We use the term microdata set to refer to a data set whose records contain detailed information about
a single subject. Privacy protection in microdata releases is a discipline with a long history. 
It was initially developed 
in the context of official statistics 
under the name of Statistical Disclosure Control (SDC)~\cite{SDC}. At that time,
the number of data controllers was limited, and the data were collected
under a strong pledge of privacy. This allowed making reasonable assumptions
about the side knowledge available to intruders for them to conduct inference
attacks leading to disclosure. Such assumptions were very useful to
adjust the SDC methods in view of obtaining adequate disclosure protection.
However, with
the development of IT, the landscape changed radically. Nowadays,
large amounts of heterogeneous personal data are collected by a large number
of data controllers~\cite{Soria2015big}, public and private; in this context, making
assumptions about the knowledge available to intruders is quite difficult,
and using robust privacy models to protect data releases seems the best option.

$\epsilon$-Differential privacy~\cite{DworkMNS06} ($\epsilon$-DP) is a well-known privacy
model whose privacy guarantees are independent of the side knowledge available
to intruders. This makes DP particularly suitable in the current landscape.
However, unlike other privacy models originally developed
for data releases (\emph{e.g.} $k$-anonymity~\cite{Samarati2001}, 
$l$-diversity~\cite{Machanavajjhala2007} or $t$-closeness~\cite{Li2007}),
DP was initially designed for interactive settings, in which only 
the outcomes of queries submitted to a database held by a trusted 
party are protected. 

More recently, DP has also been used in the more convenient non-interactive setting, in which the aim is to release anonymized data sets~\cite{Zhang2014,Xiao2010,Soria2014,Sanchez2016,Chu2023}
Yet, in such a setting, DP introduces large distortions in the protected outcomes thereby significantly hampering their analytical utility.
As a result, DP has only being deployed to a limited extent in real-world 
applications and,
when done, the privacy requirements ($\epsilon$ value) have been severely 
relaxed in order to keep data reasonably useful~\cite{limits}.
A paradigmatic example is the recent use of DP by the U.S. Census Bureau to protect the 2020 Decennial Census release~\cite{Abowd}. To retain some utility, they were
forced to take $\epsilon=39.9$~\cite{USCB22} and, even with this large value, data utility significantly degraded w.r.t. the former Census releases using non-DP data protection~\cite{Kenny21}. In this sense, it is well-known that employing $\epsilon$ values larger than 1 dilutes the DP privacy guarantees until the point that \emph{DP delivers privacy mostly in name}~\cite{Dwork19}. 

In order to reconcile data utility with DP-like privacy guarantees, we proposed $\epsilon$-individual differential privacy~\cite{iDP}
($\epsilon$-iDP), a privacy model that can incur less information
loss than the standard DP, while giving subjects the same privacy protection as 
DP. The focus of this work is to design mechanisms to use iDP in data releases and thereby 
benefit from its enhanced utility-privacy trade-off.

\subsection*{Contribution and Plan of This Paper} 

To take advantage of iDP in data releases, we propose several strategies based on data microaggregation~\cite{Domi02} whose local sensitivities are significantly lower than the
global sensitivity required in standard DP. 
In this way, we enable privacy-preserving data releases that offer
the robust privacy guarantees of DP at the individual level while preserving data utility 
significantly better than standard DP.

Due to its definition, the microaggregation-based iDP-protected data we obtain can never offer less protection than
the underlying microaggregation, which hides individuals in a group. In fact, our work shows that it does much better: {\em our approach allows using the small values of $\epsilon$ recommended in~\cite{Dwork2011}} 
(the only ones that are actually meaningful in DP-like privacy models) {\em while maintaining analytical utility to an extent that standard DP cannot offer for such small values}.

We validate the above through a set of experiments on several standard data sets, whose utility is evaluated through general-purpose utility metrics and in machine learning tasks. 

The rest of this paper is organized as follows. In Section~\ref{subsec:dp} we give
background on DP and iDP. 
In Section~\ref{sec:process}, we review our approach to DP data releases and detail our proposal
to generate iDP data sets through carefully tailored microaggregation strategies.
In Section~\ref{sec:experiments},
we report on the experiments we conducted on several data sets.
Section~\ref{sec:Conclusions} gathers conclusions and 
identifies future research lines.

\section{From Differential Privacy to Individual Differential Privacy\label{subsec:dp}}

Differential privacy~\cite{DworkMNS06} stands out because of the strong privacy
guarantees it offers. DP does not make any assumptions about the
side knowledge available to the intruders; rather, disclosure risk
limitation is tackled in a relative manner: the result of any analysis
should be similar between data sets that differ in one record. 
Assuming 
that each record corresponds to an individual, the rationale of DP 
is explained in~\cite{Dwork2006}:
\begin{quotation}
	Any given disclosure will be, within a multiplicative factor, just
	as likely whether or not the individual participates in the database.
	As a consequence, there is a nominally higher risk to the individual
	in participating, and only nominal gain to be had by concealing or
	misrepresenting one's data.
\end{quotation}
With DP, individuals should not be reluctant to participate in the data
set because the risk of disclosure is only very marginally increased
by such participation.

Differential privacy assumes a trusted party that:
(i) holds the database, (ii) receives the queries submitted by the
data users, and (iii) responds to them in a privacy-aware manner.
The notion of differential privacy is formalized according to the
following definition.

\begin{definition}[$\epsilon$-Differential privacy]
	\label{def:dp}A randomized function $\kappa$ gives $\epsilon$-dif\-fer\-ent\-ial
	privacy if, for all data sets $D_{1}$ and $D_{2}$ that differ in
	one record (\emph{a.k.a.} neighbor data sets), and all $S\subset Range(\kappa)$,
	we have 
	\[
	\Pr(\kappa(D_{1})\in S)\text{\ensuremath{\le}}\exp(\epsilon)\Pr(\kappa(D_{2})\in S).
	\]
\end{definition}

Safe values for $\epsilon$ are 0.01, 0.1~\cite{Dwork2011};
for larger values, the privacy guarantees of DP tend to vanish 
because the DP-protected outcomes may be substantially 
affected by the presence or absence of each individual, 
which increases the disclosure risk~\cite{iDP}.  

DP has composability properties, that is, aggregating several differentially private results still satisfies
DP although, sometimes, with a different $\epsilon$. 

\begin{theorem}[Sequential composition]
	\label{teo1} Let $\kappa_{1}$ be a randomized function giving $\epsilon_{1}$-DP
	and $\kappa_{2}$ a randomized function giving $\epsilon_{2}$-DP.
	Then, any deterministic function of $(\kappa_{1},\kappa_{2})$ gives
	$(\epsilon_{1}+\epsilon_{2})$-DP. 
\end{theorem}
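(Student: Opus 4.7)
The plan is to prove the theorem in two stages: first show that the joint mechanism $\kappa := (\kappa_1,\kappa_2)$ itself satisfies $(\epsilon_1+\epsilon_2)$-DP, and then deduce the stated result about any deterministic function of $(\kappa_1,\kappa_2)$ via a short post-processing argument.

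For the first stage, I would fix neighbor data sets $D_1,D_2$ and a measurable set $S\subset \mathrm{Range}(\kappa_1)\times\mathrm{Range}(\kappa_2)$. Assuming (as is standard) that the internal randomness of $\kappa_1$ and $\kappa_2$ is independent, the joint distribution factors and I can decompose by conditioning on the first coordinate:
\[
\Pr(\kappa(D_1)\in S) = \int \Pr\bigl(\kappa_2(D_1)\in S_{s_1}\bigr)\, d\mu_{D_1}(s_1),
\]
where $S_{s_1}:=\{s_2:(s_1,s_2)\in S\}$ is the section of $S$ and $\mu_{D_i}$ denotes the law of $\kappa_1(D_i)$. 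Applying the $\epsilon_2$-DP guarantee of $\kappa_2$ pointwise inside the integrand replaces $\kappa_2(D_1)$ by $\kappa_2(D_2)$ at the price of a multiplicative $\exp(\epsilon_2)$ factor. Applying the $\epsilon_1$-DP guarantee of $\kappa_1$ to the outer measure, i.e.\ using that $\mu_{D_1}(A)\le\exp(\epsilon_1)\mu_{D_2}(A)$ for every measurable $A$, yields another $\exp(\epsilon_1)$ factor and leaves the remaining integral equal to $\Pr(\kappa(D_2)\in S)$. Multiplying the two bounds gives $\exp(\epsilon_1+\epsilon_2)$, as required.

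For the second stage, let $f$ be any deterministic function defined on $\mathrm{Range}(\kappa)$ and set $\rho:=f\circ\kappa$. Because $f$ does not depend on the data, for any measurable target set $T$ the preimage $f^{-1}(T)$ is a fixed subset of $\mathrm{Range}(\kappa)$, so the joint bound proved above gives
\[
\Pr(\rho(D_1)\in T)=\Pr\bigl(\kappa(D_1)\in f^{-1}(T)\bigr)\le \exp(\epsilon_1+\epsilon_2)\Pr\bigl(\kappa(D_2)\in f^{-1}(T)\bigr)=\exp(\epsilon_1+\epsilon_2)\Pr(\rho(D_2)\in T).
\]

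The main obstacle is being careful about the probabilistic setup rather than the algebra: one must explicitly invoke the independence of the coin tosses of the two mechanisms in order to factor the joint law, and one must justify chaining the two DP inequalities by a brief absolute-continuity argument (so that the pointwise bound on densities translates into the integrated bound). In the discrete case these subtleties collapse to a direct double application of Definition~\ref{def:dp}, which is why the result is often stated without these caveats.
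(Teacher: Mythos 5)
Your proof is correct. Note, however, that the paper itself offers no proof of this theorem: it is quoted as standard background from the differential-privacy literature, so there is nothing to compare your argument against. What you give is the canonical argument -- factor the joint law using the independence of the two mechanisms' internal randomness, apply the $\epsilon_2$ bound to the sections $S_{s_1}$ pointwise, lift the measure inequality $\mu_{D_1}(A)\le\exp(\epsilon_1)\mu_{D_2}(A)$ to integrals of nonnegative functions via simple-function approximation, and finish with post-processing ($f^{-1}(T)$ is data-independent) -- and it correctly covers the non-adaptive setting that the statement describes. One small remark: you do not really need a separate ``absolute-continuity argument'' to chain the inequalities; the set-wise bound on $\mu_{D_1}$ versus $\mu_{D_2}$ already yields the corresponding bound on integrals of nonnegative measurable functions by monotone approximation (absolute continuity of $\mu_{D_1}$ with respect to $\mu_{D_2}$ is a consequence of the DP inequality, not an extra hypothesis you must verify).
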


\begin{theorem}[Parallel composition]
	\label{teo2} Let $\kappa_{1}$ and $\kappa_{2}$ be randomized functions
	giving $\epsilon$-DP. If $\kappa_{1}$ and $\kappa_{2}$ are applied
	to {\em disjoint} data sets or subsets of records, any deterministic
	function of $(\kappa_{1},\kappa_{2})$ gives $\epsilon$-DP. 
\end{theorem}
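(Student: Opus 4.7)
The plan is to reduce the parallel-composition statement to a single application of $\epsilon$-DP plus the fact that deterministic post-processing cannot increase the DP loss. Fix two neighbor data sets $D$ and $D'$, differing in exactly one record $r$. Let $A_1, A_2$ denote the disjoint subsets on which $\kappa_1$ and $\kappa_2$ respectively operate. Because $r$ is a single record, it belongs to exactly one of the two subsets; without loss of generality assume $r \in A_1$. Then the restriction of $D$ and $D'$ to $A_2$ coincides (so $\kappa_2$ is fed the same input in both cases), while the restrictions to $A_1$ are themselves neighbors (differing in the single record $r$, or in fact identical if $r$ falls outside $A_1$, in which case the bound is trivial).

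Next I would exploit the independence of $\kappa_1$ and $\kappa_2$, which is implicit in the statement (they are two separately defined randomized mechanisms acting on disjoint inputs). For any measurable product set $S_1 \times S_2 \subset Range(\kappa_1) \times Range(\kappa_2)$, independence gives
\[
\Pr\bigl((\kappa_1(D|_{A_1}),\kappa_2(D|_{A_2})) \in S_1\times S_2\bigr) = \Pr(\kappa_1(D|_{A_1})\in S_1)\cdot \Pr(\kappa_2(D|_{A_2})\in S_2).
\]
Applying $\epsilon$-DP of $\kappa_1$ to the first factor and the equality $\Pr(\kappa_2(D|_{A_2})\in S_2)=\Pr(\kappa_2(D'|_{A_2})\in S_2)$ to the second, the joint probability for $D$ is at most $e^{\epsilon}$ times the joint probability for $D'$. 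A standard extension from product sets to arbitrary measurable $S$ in the joint output space (via a $\pi$-$\lambda$ or monotone-class argument) then upgrades the bound to all of $Range(\kappa_1)\times Range(\kappa_2)$.

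Finally, let $f$ be any deterministic function of $(\kappa_1,\kappa_2)$ and let $T \subset Range(f)$. Setting $S = f^{-1}(T)$, the event $\{f(\kappa_1,\kappa_2)\in T\}$ is exactly $\{(\kappa_1,\kappa_2)\in S\}$, so the previous inequality immediately yields $\Pr(f(\kappa_1(D|_{A_1}),\kappa_2(D|_{A_2}))\in T) \le e^{\epsilon}\Pr(f(\kappa_1(D'|_{A_1}),\kappa_2(D'|_{A_2}))\in T)$, establishing $\epsilon$-DP of $f$.

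The only subtle point, and where care is needed, is the independence assumption and the passage from product events to general measurable events in the joint output space; once that is done cleanly, the DP bound on $\kappa_1$ and the invariance of $\kappa_2$ combine trivially, and deterministic post-processing contributes nothing to the privacy loss.
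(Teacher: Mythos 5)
The paper states this theorem only as recalled background (it is the standard parallel-composition result for differential privacy) and gives no proof of its own, so there is nothing to compare your argument against; on its own merits, your proof follows the standard route and is essentially correct. The key decomposition is right: the single differing record lies in at most one of the disjoint subsets, so one mechanism receives identical inputs under $D$ and $D'$ while the other receives neighboring inputs, and the post-processing step via $S=f^{-1}(T)$ correctly disposes of the deterministic function.

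One technical caution on the step you yourself flag as subtle: the class of sets $S$ satisfying $\Pr\bigl((\kappa_1,\kappa_2)(D)\in S\bigr)\le e^{\epsilon}\Pr\bigl((\kappa_1,\kappa_2)(D')\in S\bigr)$ is \emph{not} a $\lambda$-system, because a one-sided inequality between probability measures is not preserved under complementation, so the Dynkin $\pi$-$\lambda$ theorem does not apply directly as you suggest. Two clean repairs are available: (i) apply the monotone class theorem starting from the algebra of finite disjoint unions of measurable rectangles, noting that the inequality is preserved under finite disjoint unions (additivity) and under monotone limits (continuity of finite measures); or (ii) avoid the extension argument entirely by writing $\Pr\bigl((\kappa_1,\kappa_2)(D)\in S\bigr)=\int \Pr\bigl(\kappa_1(D|_{A_1})\in S_y\bigr)\,d\mu_2(y)$ with $\mu_2$ the common law of $\kappa_2$ on the two data sets, applying the $\epsilon$-DP bound to each section $S_y$, and integrating. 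With either repair the proof is complete.
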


For a numerical query $f$, $\epsilon$-DP can be attained via noise
addition; that is, by adding some random noise to the actual query
result: $\ensuremath{\kappa(x)=f(x)+N}$.
The amount of noise that needs to be added depends on the variability
of the query function between neighbor data sets, that is, 
on the global sensitivity of the query.

\begin{definition}[Global sensitivity]
	\label{def:global_sensitivity}Let $f$ be a function that is evaluated at data
	sets
	in $\mathcal{D}$
	and returns values in $\mathbb{R}^{k}$. The global sensitivity of $f$ 
	in $\mathcal{D}$ is 
	\[
	\Delta f=\max_{\begin{array}{c}
		{\scriptstyle {D_1,D_2\in\mathcal{D}}}\\
		{\scriptstyle {d(D_1,D_2)=1}}
		\end{array}}\left\Vert f(D_1)-f(D_2)\right\Vert _{1},
	\]
	where $d(D_1,D_2)$ means that data sets $D_1$ and $D_2$ differ in one record. 
\end{definition}

Even though several noise distributions are possible, the Laplace
distribution~\cite{DworkMNS06} is the most commonly employed one.

As mentioned in Section~\ref{intro}, the deployment of DP has been
limited in practice, in spite of its strong privacy guarantees.
In fact, those guarantees are only meaningful for very small $\epsilon$ values,
but practitioners need to use large, unsafe $\epsilon$ values to preserve
sufficient utility.
In an attempt to improve the accuracy of the protected data, several relaxations of DP have
been proposed, such as $(\epsilon,\delta)$-DP~\cite{DworkKMMN06}, concentrated DP~\cite{DworkR16}, zero-concentrated DP~\cite{BunandSteinke2016} or R\'enyi DP~\cite{Mironov2017}. 
Essentially, these relaxations allow DP guarantees to be broken either by a small amount or
with a small probability. An alternative relaxation of DP is iDP~\cite{iDP}. 
The latter is particularly interesting because, unlike
the above, it preserves the privacy guarantees that DP gives to individual 
subjects for a given $\epsilon$.

Next, we recall the rationale of iDP.
When formalizing DP in Definition~\ref{def:dp}, the trusted party
is not allowed to take advantage of her knowledge about the actual
data set to adjust the level of noise. This leads to the formalization of DP 
being stricter than required by the intuition of DP (see quotation above),
which results in unnecessary accuracy loss. Let us explain this in greater
detail.

Consider an individual subject $I$ who has to decide between participating
in a data set or not. To neutralize any reluctance by $I$ to disclose
her private information, $I$ is told that query answers based on
the data set will not allow anyone to learn anything that was not
learnable without $I$'s presence; this is precisely the intuitive
privacy guarantee DP offers. To attain such privacy guarantees, DP
requires the response to be indistinguishable between any pair of
neighbor data sets. While such a requirement yields the target privacy
guarantees, it is an overkill because the trusted party is not allowed
to take advantage of her knowledge of the data set. In other words,
if $D$ is the collected data set, the target privacy guarantees can
be attained by just requiring indistinguishability of the responses
between $D$ and its neighbor data sets. Notice that, although the
data set $D$ is not known until all the individuals have made their
decisions about participating/contributing to it, it is known to the
trusted party at the time of query response. 

According to the previous discussion, $\epsilon$-individual differential
privacy ($\epsilon$-iDP) is defined as follows.

\begin{definition}[$\epsilon$-Individual differential privacy~\cite{iDP}]
	\label{def:idp}Given a data set $D$, a response mechanism $\kappa(\cdot)$
	satisfies $\epsilon$-individual differential privacy (or $\epsilon$-iDP)
	if, for any neighbor data set $D'$ of $D$, and any $S\subset Range(\kappa)$
	we have 
	\[
	\begin{array}{l}
	\exp(-\epsilon)\Pr(\kappa(D')\in S)\le\Pr(\kappa(D)\in S)\\
	\text{\ensuremath{\le}}\exp(\epsilon)\Pr(\kappa(D')\in S).
	\end{array}
	\]
\end{definition}

In line with Definition~\ref{def:dp}, iDP requires the probability
of any result to differ between neighbor data sets at most by a factor
$\exp(\epsilon)$. However, unlike in Definition~\ref{def:dp},
the role of the data sets $D$ and $D'$ is not exchangeable: $D$
refers to the actual data set, and $D'$ to a neighbor data set of
$D$. The asymmetry between $D$ and $D'$ is relevant, because indistinguishability
is achieved only between $D$ and its neighbor data sets. As a side
effect of this asymmetry, we need to explicitly enforce an upper bound
($\Pr(\kappa(D)\in S)\text{\ensuremath{\le}}\exp(\epsilon)\Pr(\kappa(D')\in S)$)
and a lower bound ($\exp(-\epsilon)\Pr(\kappa(D')\in S))\le\Pr(\kappa(D)\in S)$).
This was not needed in Definition~\ref{def:dp} because the upper
bound could be obtained from the lower bound by exchanging the roles
of $D$ and $D'$.

This difference has an important and beneficial practical consequence.
Unlike DP, which requires
calibrating the added noise to the global sensitivity (that is, to the greatest change
between any pair of neighbor data sets), iDP can be attained 
by calibrating noise to local
sensitivity 
(which is normally much lower).
\begin{definition}[Local sensitivity~\cite{Nissim2007}]
	\label{def:local}
	The local sensitivity
	of a query function $f$ at a data set $D$ is 
	\[
	LS_{f}(D)=\max_{y:d(y,D)=1}\left\Vert f(y)-f(D)\right\Vert _{1},
	\]
	where $d(y,D)$ means that data set $y$ differs from $D$ in one record. 
\end{definition}

The following result is proven in~\cite{iDP}.

\begin{proposition}
	\label{prop:iDP_Lap}Let $f$ be a query function that takes values
	in $\mathbb{R}^{k}$. The mechanism $\kappa(x)=f(x)+(N_{1},\ldots,N_{k})$,
	where $N_{i}$ are independent identically distributed $Laplace(0,LS_{f}(D)/\epsilon)$
	random noises, gives $\epsilon$-iDP.
\end{proposition}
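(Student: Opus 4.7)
The plan is to adapt the classical Laplace-mechanism argument while exploiting the asymmetry of Definition~\ref{def:idp}: because indistinguishability is only demanded between the fixed actual data set $D$ and its neighbors, calibrating to $LS_f(D)$ rather than to the global sensitivity should suffice. First, I fix $D$ and an arbitrary neighbor $D'$ of $D$, set $b = LS_f(D)/\epsilon$, and write out the joint densities of $\kappa(D)$ and $\kappa(D')$ at a point $z \in \mathbb{R}^k$. Since the $N_i$ are i.i.d.\ Laplace with scale $b$, these densities factor coordinatewise as $\prod_{i=1}^{k} (2b)^{-1}\exp(-|z_i-f(D)_i|/b)$ and analogously for $D'$.

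Next, I form the pointwise likelihood ratio $p_{\kappa(D)}(z)/p_{\kappa(D')}(z)$. The normalizing constants cancel, leaving $\exp\bigl(\sum_{i=1}^{k} (|z_i - f(D')_i| - |z_i - f(D)_i|)/b\bigr)$. The triangle inequality bounds each summand in absolute value by $|f(D)_i - f(D')_i|$, so the exponent lies in $[-\|f(D)-f(D')\|_1/b,\, \|f(D)-f(D')\|_1/b]$. Invoking Definition~\ref{def:local}, and using that $D'$ is a neighbor of $D$, we have $\|f(D)-f(D')\|_1 \le LS_f(D)$. Substituting $b=LS_f(D)/\epsilon$ then yields the two-sided bound $\exp(-\epsilon) \le p_{\kappa(D)}(z)/p_{\kappa(D')}(z) \le \exp(\epsilon)$. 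Integrating both inequalities over any measurable $S\subset Range(\kappa)$ transfers the pointwise bound into the probability bound required by Definition~\ref{def:idp}.

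The main conceptual point, rather than a technical obstacle, is seeing why calibration to $LS_f(D)$ is legitimate here and why the analogous argument would fail for standard DP. In DP the roles of $D$ and $D'$ are symmetric, so one would have to bound $\|f(D)-f(D')\|_1$ uniformly over every pair of neighbors, which forces calibration to the global sensitivity $\Delta f$. In iDP only $D$ is privileged, and since $LS_f(D)$ is by definition the maximum of $\|f(D)-f(D')\|_1$ over all neighbors $D'$ of $D$, the bound is tight for exactly the neighbors that appear in Definition~\ref{def:idp}. All remaining steps — the factorization of the product density, the triangle-inequality estimate, and the passage from densities to probabilities — are routine, so the argument reduces essentially to correctly matching the local-sensitivity calibration with the one-sided indistinguishability required by iDP.
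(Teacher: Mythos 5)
The paper itself does not prove Proposition~\ref{prop:iDP_Lap}; it only states that the result is proven in the cited reference~\cite{iDP}. Your argument is correct and is the standard Laplace-mechanism density-ratio computation with the global sensitivity replaced by $LS_{f}(D)$ and both sides of the pointwise bound retained to match the two-sided requirement of Definition~\ref{def:idp} --- which is precisely the argument used in~\cite{iDP} --- so it matches the intended proof.
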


\section{iDP Data Sets via Individual Ranking Microaggregation\label{sec:process}}

Even though the natural application of DP and iDP is the interactive setting, both can be used to generate protected data sets via data masking. In the following, 
we first discuss how this can be achieved for standard DP, and then we tailor masking to reap the utility-preserving advantages of iDP.

\subsection{DP Data Sets\label{subsec:ir_dp}}

For years, the usual approach to generate DP data sets was based on computing 
DP histograms~\cite{Xiao2010,Zhang2014}; that is, on approximating
the data distribution by partitioning the data domain 
and counting the number of records in each partition set in a DP manner.
However, histogram-based
approaches have severe limitations when the number of attributes grows:
for a fixed granularity in each attribute, the number of histogram
bins grows exponentially with the number of attributes, which has
a devastating effect on both computational cost and accuracy. To mitigate
these issues, an alternative approach has been proposed 
that is based on masking attribute values of the
records in the original data set~\cite{Soria2014,Sanchez2016}. 
In order to reduce the sensitivity to record changes (which is the factor
that basically determines the noise needed to attain DP and hence
the utility damage incurred),
this approach applies a microaggregation step before masking.

Let $D$ be the collected data set. Assume that one wants to generate
$D_{\epsilon}$ --an ano\-nym\-iz\-ed version of $D$-- that satisfies
$\epsilon$-DP. Let $I_{r}(D)$ be the query that returns $r$. Then, one
can think of the data set $D$ as the collected answers to the queries
$I_{r}(D)$ for $r\in D$, and one can generate $D_{\epsilon}$ by
collecting $\epsilon$-DP responses to the previous queries. However,
since the purpose of DP is to make sure that individual records
do not have any significant effect on query responses, the amount 
of noise that should be added to the responses of $I_{r}(D)$
to fulfill DP is necessarily high, which severely deteriorates the accuracy 
of $D_{\epsilon}$.

To make masking a viable option for generating DP data sets,  
 the sensitivity of individual records needs to be reduced. This was attained
in~\cite{Sanchez2016,Soria2017mdai} for standard DP by adding a microaggregation
step before the actual DP data set generation. Even though microaggregation was 
initially proposed as an anonymization technique
in its own right~\cite{Domi02}, in this work we will use it 
as a means to reduce the
sensitivity of the queries.

Microaggregation proceeds in two steps:
\begin{enumerate}
	\item Split the data set into clusters of similar records of cardinality greater 
	than or equal to $k$ (a given parameter).
	\item Compute a representative record of each cluster and replace each record
	in the cluster by a copy of the representative.
\end{enumerate}

In Figure~\ref{fig:micro} we describe the generation of a 
microaggregated data set $\bar{D}$
via univariate microaggregation of each attribute in $D$. Let $D$
contain data about attributes $A^{1},\ldots,A^{m}$, for individuals
$I_{1},\ldots,I_{n}$. To microaggregate attribute $A^{a}$,
we cluster records by their similarity w.r.t. attribute $A^{a}$, compute
the centroid for $A^{a}$ of each cluster, and replace the original values in
$A^{a}$ by the corresponding centroid. Formally, 
let $C^{a}=\{C_{j}^{a}\}_{j}$ be the
clustering associated with attribute $A^{a}$, let $c_{j}^{a}$ be the
centroid associated with cluster $C_{j}^{a}$, and let $\rho_{a}(I_{i})$
be the index of the centroid associated with individual $I_{i}$. We
replace $x_{i}^{a}$ (the value of attribute $A^{a}$ for individual
$I_{i}$) by the corresponding centroid $c_{\rho_{a}(I_{i})}^{a}$.

\begin{figure}
	\begin{centering}
		\begin{tabular}{cccccccccc}
			&  & $D$ &  &  &  & $\bar{D}$ &  &  & \tabularnewline
			\cline{2-4} \cline{6-8} 
			& $A^{1}$ & $\ldots$ & $A^{m}$ &  & $A^{1}$ & $\ldots$ & $A^{m}$ &  & \tabularnewline
			\cline{2-4} \cline{6-8} 
			$I_{1}$ & $x_{1}^{1}$ & $\ldots$ & $x_{1}^{m}$ &  & $c_{\rho_{1}(I_{1})}^{1}$ & $\ldots$ & $c_{\rho_{m}(I_{1})}^{m}$ &  & \tabularnewline
			$I_{2}$ & $x_{2}^{1}$ & $\ldots$ & $x_{2}^{m}$ & $\rightarrow$ & $c_{\rho_{1}(I_{2})}^{1}$ & $\ldots$ & $c_{\rho_{m}(I_{2})}^{m}$ &  & \tabularnewline
			$\vdots$ & $\vdots$ &  & $\vdots$ &  & $\vdots$ &  & $\vdots$ &  & \tabularnewline
			$I_{n}$ & $x_{n}^{1}$ & $\ldots$ & $x_{n}^{m}$ &  & $c_{\rho_{1}(I_{n})}^{1}$ & $\ldots$ & $c_{\rho_{m}(I_{n})}^{m}$ &  & \tabularnewline
			\cline{2-4} \cline{6-8} 
		\end{tabular}
		\par\end{centering}
	\caption{Generation of $\bar{D}$ via univariate microaggregation of each attribute
		in $D$\label{fig:micro}}
	
\end{figure}

Once the microaggregated data set $\bar{D}$ has been created, we
generate $\bar{D}_{\epsilon}$, a DP version of $\bar{D}$, through the process
depicted in Figure~\ref{fig:dp}. 
\begin{itemize}
	\item 
	We work independently with each attribute $A^{a}$ to make it $\epsilon_{a}$-DP, where $\epsilon_{a}$
	is the share of the privacy budget assigned to attribute $A^{a}$.
	Afterwards, we combine all the DP attributes to generate the $\epsilon$-DP data set. By sequential 
	composition (see Theorem~\ref{teo2}), the overall privacy budget must be split among each of the attributes:
	$\epsilon=\sum\epsilon_{a}$.
	\item In the generation of the $\epsilon_{a}$-DP attributes, parallel composition applies
	(see Theorem~\ref{teo1}) because each centroid depends on a disjoint set of individuals (records).
	Thus, we can use the entire privacy budget assigned to attribute $A^a$, $\epsilon_a$, to mask
	each of the centroids associated with $A^a$. This is done by masking each centroid $c_{j}^{a}$ 
	(\emph{e.g.}, via Laplace noise) according to its global
	sensitivity ($\Delta c_{j}^{a})$ and privacy budget $\epsilon_{a}$.
\end{itemize}

The global sensitivity of a centroid is the maximum change in the centroid value
that can ensue from a change in one of the records in the cluster. Since the sensitivity
of the centroids is smaller than the sensitivity of the original records,
making $\bar{D}$ differentially private requires less noise than making $D$ differentially private. Loosely
speaking, centroids are less sensitive than individual
records because the former are an aggregation of several records. 
Counterintuitively,
even though the prior microaggregation step distorts data to some extent,
by starting from the microaggregated data set rather than the original
data set, we manage to obtain a DP data set that preserves substantially
more utility for a given $\epsilon$, as shown in~\cite{Sanchez2016}.
This is so because the noise reduction enabled by the prior microaggregation
(due to cluster centroids being less sensitive than individual records),
more than compensates the information loss introduced by such microaggregation;
specifically, whereas noise is random, microaggregation can exploit the underlying structure of data.

\begin{figure*}
	\begin{centering}
		\begin{tabular}{cccccccc}
			&  & $\bar{D}$ &  &  &  & $\bar{D}$ & \tabularnewline
			\cline{2-4} \cline{6-8} 
			& $A^{1}$ & $\ldots$ & $A^{m}$ &  & $A^{1}$ & $\ldots$ & $A^{m}$\tabularnewline
			\cline{2-4} \cline{6-8} 
			$I_{1}$ & $c_{\rho_{1}(I_{1})}^{1}$ & $\ldots$ & $c_{\rho_{m}(I_{1})}^{1}$ &  & $c_{\rho_{1}(I_{1})}^{1}+n_{\rho_{1}(I_{1})}^{1}$ & $\ldots$ & $c_{\rho_{m}(I_{1})}^{m}+n_{\rho_{m}(I_{1})}^{m}$\tabularnewline
			$I_{2}$ & $c_{\rho_{1}(I_{2})}^{1}$ & $\ldots$ & $c_{\rho_{m}(I_{2})}^{1}$ & $\rightarrow$ & $c_{\rho_{1}(I_{2})}^{1}+n_{\rho_{1}(I_{2})}^{1}$ & $\ldots$ & $c_{\rho_{m}(I_{2})}^{m}+n_{\rho_{m}(I_{2})}^{m}$\tabularnewline
			$\vdots$ & $\vdots$ &  & $\vdots$ &  & $\vdots$ &  & $\vdots$\tabularnewline
			$I_{n}$ & $c_{\rho_{1}(I_{n})}^{1}$ & $\ldots$ & $c_{\rho_{m}(I_{n})}^{1}$ &  & $c_{\rho_{1}(I_{n})}^{1}+n_{\rho_{1}(I_{n})}^{1}$ & $\ldots$ & $c_{\rho_{m}(I_{n})}^{m}+n_{\rho_{m}(I_{n})}^{m}$\tabularnewline
			\cline{2-4} \cline{6-8} 
			\multicolumn{8}{c}{}\tabularnewline
			\multicolumn{8}{c}{where $n_{j}^{a}$ is drawn from $Laplace(0,\Delta c_{j}^{a}/\epsilon_{a})$}\tabularnewline
		\end{tabular}\caption{Generation of $\bar{D_{\epsilon}}$ by masking the centroids with
			the appropriate amount of noise $D$\label{fig:dp}}
		\par\end{centering}
\end{figure*}

Algorithm~\ref{alg:ir_dp} formalizes the process described in Figures~\ref{fig:micro}
and~\ref{fig:dp}. The algorithm receives the original data set,
the privacy budget assigned to each attribute and the microaggregation
algorithm associated with each attribute. For each attribute, we run
the microaggregation (line 08), compute the global sensitivity of
the centroids (line 10), draw the noise from the Laplace distribution
(line 11), and mask each occurrence of the centroid (line 12). The algorithm
receives the microaggregation algorithms as a parameter. 
As discussed in~\cite{Sanchez2016}, a sensible
choice is to use individual-ranking microaggregation (form clusters
containing consecutive values of the attribute) and compute the centroid
as the arithmetic mean of the values in the cluster: 
\[
c_{j}^{a}=\frac{1}{|C_{j}^{a}|}\sum_{x\in C_{j}^{a}}x.
\]
In this case, the global sensitivity of a
centroid equals the size of the attribute domain over the size of
the associated cluster:
\[
\Delta c_{j}^{a}=\frac{\max A^{a}-\min A^{a}}{|C_{j}^{a}|},
\]
where $\max A^{a}$ is the maximum of the domain of $A^{a}$, and
$\min A^{a}$ is the minimum of the domain of $A^{a}$.

\begin{algorithm}
	\caption{\label{alg:ir_dp}$\epsilon$-DP data set generation via univariate
		microaggregation and Laplace noise addition}
	
	01\hspace{0.2cm}\textbf{Require}:
	
	02\hspace{0.2cm}\hspace{0.5cm}$D:$ data set with attributes $A^{1},\ldots,A^{m}$
	
	03\hspace{0.2cm}\hspace{0.5cm}$\epsilon_{a}:$ privacy budget assigned
	to $A^{a}$ (with $\epsilon=\sum\epsilon_{a})$
	
	04\hspace{0.2cm}\hspace{0.5cm}$M_{a}$: univariate microaggregation
	algorithm over attribute $A^{a}$
	
	05\hspace{0.2cm}\textbf{Output}:
	
	06\hspace{0.2cm}\hspace{0.5cm}$\bar{D}_{\epsilon}$: $\epsilon$-DP
	data set 
	
	\vspace{0.5cm}
	
	07\hspace{0.2cm}\textbf{for} $a=1$ \textbf{to} $m$
	
	08\hspace{0.2cm}\hspace{0.5cm}\textbf{let} $(C_{j}^{a},c_{j}^{a})_{j}$
	= the clusters and centroids produced by applying $M_{a}$ to $A^{a}$ 
	
	09\hspace{0.2cm}\hspace{0.5cm}\textbf{for} \textbf{each} $C_{j}^{a}$
	
	10\hspace{0.2cm}\hspace{0.5cm}\hspace{0.5cm}\textbf{let} $\Delta c_{j}^{a}$
	be the global sensitivity of cluster $c_{j}^{a}$
	
	11\hspace{0.2cm}\hspace{0.5cm}\hspace{0.5cm}\textbf{let} $n_{j}^{a}$
	be a draw from a $Laplace(0,\Delta c_{j}^{a}/\epsilon_{a})$ distribution
	
	12\hspace{0.2cm}\hspace{0.5cm}\hspace{0.5cm}Replace each $c_j^a\in C_{j}^{a}$
	by $c_{j}^{a}+n_{j}^{a}$

	13\hspace{0.2cm}\hspace{0.5cm}\textbf{end} \textbf{for}
	
	14\hspace{0.2cm}\textbf{end} \textbf{for}
	
	15\hspace{0.2cm}\textbf{return} $D$
\end{algorithm}

\subsection{iDP Data Sets}

To enforce DP, the proposal described in Algorithm~\ref{alg:ir_dp}
needs to use the global sensitivity of the attributes.
Unfortunately, global sensitivities can be very large, 
especially when attributes have domains that are much larger than 
their actual value ranges in the original data set. This results in much 
noise being added, which severely damages the accuracy of the generated data set.
Also, it may be difficult to 
bound the domain of some attributes 
(what is the upper bound of an attribute such as \emph{income}?) 
or it may be thoroughly impossible (if the attribute is naturally unbounded).
In such cases, attribute domains should be artificially limited to reasonably large
bounds, which will also significantly increase the global sensitivity.
These issues are the result of mechanisms enforcing DP
not being allowed to leverage the knowledge of $D$, due to the strict
formulation of DP (see Section~\ref{subsec:dp}).

In the following, we face these issues with a proposal 
to generate iDP data sets. Under 
iDP, individual subjects are given the same privacy guarantees  
as under DP (\emph{i.e.}, the presence
or absence of any single subject's data does not have a significant effect on the
protected data set), but the accuracy of the iDP data
set is better because local sensitivities are used rather than global ones. 

First, we describe
a naive adaptation of the approach described above for DP to iDP: the global
sensitivity is merely replaced by the local sensitivity. Then, we take several steps to
use the knowledge of $D$ and to further reduce
the sensitivity. This is done by introducing a pre-processing step
that alters the data set before microaggregation.

\subsubsection{iDP Data Sets Using Local Sensitivity\label{subsec:first}}

By using iDP, we can adjust the noise to the local sensitivity of the attributes
rather than to their global sensitivity: 
\begin{itemize}
	\item The local sensitivity of an attribute centroid is the maximum change in the centroid
	value that can occur when switching from $D$ to a neighbor data set.
	In other words, it is the maximum change of the centroid value that can result 
	from a change in
	one of the records in $D$. In the worst case, we may change the smallest
	record in the cluster w.r.t. attribute $A$ to $\max A$ (the maximum value of $A$'s domain),
	or change the largest record in the cluster to $\min A$ (the minimum value
	of $A$'s domain). 
	\item On the other hand, the global sensitivity of a centroid 
	is the maximum of the local sensitivity
	across all pairs of neighbor data sets. Thus, in general, the global sensitivity
	is greater than the local sensitivity. They are equal only if the
	original data set is 
	one in which the distance between the smallest 
	(largest) record to $\max A$ ($\min A$) equals
	the distance between $\max A$ and $\min A$.
\end{itemize}

The first mechanism we propose is an adaptation of Algorithm~\ref{alg:ir_dp}
to iDP: we merely replace 
the global sensitivity ($\Delta c_{j}^{a})$
by the local sensitivity at $\bar{D}$, 
which is computed for each cluster centroid ($LS_{c_{j}^{a}}(\bar{D}))$.
Notice that, since each centroid is computed on a disjoint set of records, by
parallel composition we can work with each centroid independently.
More specifically, the changes to Algorithm~\ref{alg:ir_dp} are:
\begin{itemize}
	\item At line 10, we compute $LS_{c_{j}^{a}}(\bar{D})$, the local sensitivity
	of the centroid $c_{j}^{a}$ at $\bar{D}$.
	\item At line 11, we draw the noise $n_{j}^{a}$ from a $Laplace(0,LS_{c_{j}^{a}}(\bar{D})/\epsilon_{a})$
	distribution.
\end{itemize}

To compute $LS_{c_{j}^{a}}(\bar{D})$, we need to fix the way in which attribute centroids
are computed. The following proposition gives the expression for the local sensitivity
of a centroid when the centroid is computed as the arithmetic mean of the attribute
values in the cluster. 

\begin{proposition}
	\label{prop:sens_first}Let $C^{a}=\{C_{1}^{a},\ldots,C_{p}^{a}\}$
	be a clustering of records created w.r.t. the values of attribute 
	$A^{a}$. Let $c_{j}^{a}=\frac{1}{|C_{j}^{a}|}\sum_{x\in C_{j}^{a}}x^{a}$
	, for $j=1,\ldots,p,$ be the centroid associated with cluster $C_{j}^{a}$,
	where $x^{a}$ is the value of record $x$ for attribute $A^{a}$.
	The local sensitivity for each centroid, $LS_{c_{j}^{a}}(D)$, is
	\[
	\frac{\max\{\max A^{a}-\min_{x\in C_{j}^{a}}x^{a},\,\max_{x\in C_{j}^{a}}x^{a}-\min A^{a}\}}{|C_{j}^{a}|}
	\]
	where $\max A^{a}$ and $\min A^a$ are, respectively, the maximum 
	and the minimum of the domain of $A^{a}$.
\end{proposition}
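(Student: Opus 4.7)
The plan is to apply Definition~\ref{def:local} directly to the arithmetic-mean centroid, isolating the only way a neighbour of $D$ can affect the value $c_j^a$. First I would fix the cluster $C_j^a$ (by parallel composition, each centroid may be treated independently, since clusters induce a partition of the records) and write $c_j^a(D) = \tfrac{1}{|C_j^a|}\sum_{x\in C_j^a} x^a$. For any neighbour $y$ of $D$, exactly one record changes; call the old value $x$ and the new one $x'$. If that record does not lie in $C_j^a$, then $c_j^a(y)=c_j^a(D)$ and the difference is zero, so only substitutions inside $C_j^a$ need be considered.

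Next I would compute the effect of an intra-cluster substitution. Replacing $x\in C_j^a$ by $x'\in\text{dom}(A^a)$ yields
\[
c_j^a(y)-c_j^a(D) \;=\; \frac{x'-x}{|C_j^a|},
\]
so $\lVert c_j^a(y)-c_j^a(D)\rVert_1 = |x'-x|/|C_j^a|$. Taking the maximum over all admissible neighbours amounts to maximising $|x'-x|$ subject to $x\in\{x^a : x\in C_j^a\}$ and $x'\in[\min A^a,\max A^a]$.

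For the maximisation, I would argue coordinate-wise. For a fixed $x$, the choice of $x'$ that maximises $|x'-x|$ is either $\min A^a$ or $\max A^a$, yielding $\max\{x-\min A^a,\max A^a-x\}$. Maximising this expression over $x\in[\min_{C_j^a}x^a,\max_{C_j^a}x^a]$ gives $\max A^a-\min_{C_j^a}x^a$ (attained at $x=\min_{C_j^a}x^a$ with $x'=\max A^a$) or $\max_{C_j^a}x^a-\min A^a$ (attained at $x=\max_{C_j^a}x^a$ with $x'=\min A^a$), whichever is larger. Dividing by $|C_j^a|$ gives precisely the stated expression.

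The only delicate point is that changing a record could in principle alter the clustering $C^a$ itself, which would invalidate the identification $c_j^a(y)-c_j^a(D) = (x'-x)/|C_j^a|$. I would address this by noting that in the mechanism of Section~\ref{subsec:first} the clustering is fixed by $\bar D$ prior to the noise-addition step: local sensitivity is computed with respect to the centroid as a function of the records assigned to its (already formed) cluster, so the partition structure is treated as data-independent at this stage. Under this convention, the displayed formula follows, and in the boundary case where $\min_{C_j^a}x^a=\min A^a$ and $\max_{C_j^a}x^a=\max A^a$ it reduces to the global sensitivity $(\max A^a-\min A^a)/|C_j^a|$, consistent with Algorithm~\ref{alg:ir_dp}.
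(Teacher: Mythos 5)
Your proof is correct and follows essentially the same route as the paper's: both identify the worst-case single-record change as pushing one cluster value to a domain extreme ($\min A^a$ or $\max A^a$) and divide the resulting change in the cluster sum by $|C_j^a|$. Your version is somewhat more careful than the paper's terse argument --- you explicitly dismiss out-of-cluster changes, spell out the maximisation over $x$ and $x'$, and flag the genuine subtlety that the clustering must be treated as fixed under the record change, which the paper leaves implicit.
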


\begin{proof}
	The local sensitivity of $c_{j}^{a}$ measures the greatest change
	in $c_{j}^{a}$ that can occur as a consequence of 
	a change in one of the records in $C_{j}^{a}$.
	Since the centroid is computed as the arithmetic mean of the values of attribute $A^{a}$
	for all records in the cluster, the largest change in $c_{j}^{a}$ happens when the
	change in the record for such attribute is greatest. 
	
	Specifically, the maximum change in a record $x\in C_{j}^{a}$ is reached when 
	changing the record value $x^{a}$ by one of the extremes of the domain: either $\max A^{a}$
	or $\min A^{a}$. Thus, we can express the local sensitivity as
	\[
	LS_{c_{j}^{a}}(D)=\frac{\max_{x\in C_{j}^{a}}\{\max A^{a}-x^{a},\,x^{a}-\min A^{a}\}}{|C_{j}^{a}|},
	\]
	which is equivalent to
	\[
	\frac{\max\{\max A^{a}-\min_{x\in C_{j}^{a}}x^{a},\,\max_{x\in C_{j}^{a}}x^{a}-\min A^{a}\}}{|C_{j}^{a}|}.
	\]
\end{proof}

\subsubsection{iDP Data Sets Using Cluster-based Local Sensitivity\label{subsec:third}}

As discussed in Section~\ref{sec:process}, 
to compute the global and local sensitivities we need the domains of attributes  
to be bounded. However, for some attributes 
(\emph{e.g.}, income), there may not be a natural limit and we may need
to artificially bound the attributes to apply the algorithm. 
Bounding the attribute domain to the maximum and minimum values in the data set
is problematic, because those maximum and
minimum values correspond to specific individuals and DP is designed precisely
to hide information about 
any individual. Alternatively, we can
fix the bounds 
in a way that is independent from the actual data set.
However, if we fix the bounds without taking the original data set into account,
the empirical distribution of the data set 
is likely to be more compact than the data-independent bounds. 
This problem
also arises with attribute domains that are bounded 
but include a few outliers within their bounds.
In either case, we will get local sensitivities much larger 
than the ones that correspond to the actual data,
which leads to adding much noise and hence to poor accuracy.
Notice that the approach proposed
in the previous section to generate iDP data sets may also have the
same outlier-related issues, because it uses the attribute bounds to 
compute the local sensitivity (see Proposition~\ref{prop:sens_first}).

In this section we tackle these problems and improve the generation of iDP data sets
by making the local sensitivity of a centroid depend only on the values
within the corresponding cluster. This has two important advantages: 
(i) we avoid the shortcoming of unbounded
attributes, and (ii) we
reduce the local sensitivity (and thus the amount of noise required) 
when the attribute values in the cluster do not span the entire domain.

To make the local sensitivity of a centroid depend only on the values
of the associated cluster, we apply the following pre-processing to
the data set before using the method described in Section~\ref{subsec:first}.
For each attribute $A^{a}$ and cluster $C_{j}^{a}$, we select one
individual among those with the smallest value for $A^{a}$ in $C_{j}^{a}$ and
replace its value by the second smallest value of $A^{a}$ in $C_{j}^{a}$;
similarly, we select
one individual among those with the largest value for $A^{a}$ in $C_{j}^{a}$
and replace its value by the second largest value of $A^{a}$ in $C_{j}^{a}$.
It is important to note that our definition of second smallest (second largest) does not necessarily
imply that it is a different value from the smallest (largest); if 
there are two or more individuals that have the smallest (largest) value,
then the second smallest (largest) value is the same as the smallest (largest) value.
For example in a cluster $\{3, 3, 3, 4, 5, 6, 6\}$, we take 3 as the second smallest
value (rather than 4) and we take 6 as the second smallest value (rather than 5).
More formally, the proposed replacements are:
\begin{itemize}
	\item Let $I_{min_{j}^{a}}=\arg\min_{I_{i}\in C_{j}^{a}}\{x_{i}^{a}\}$
	be one individual with the smallest value for $A^{a}$ in $C_{j}^{a}$, and let
	$I_{min_{j}^{'a}}=\arg\min_{I_{i}\in C_{j}^{a}\setminus{I_{min_{j}^{a}}}}\{x_{i}^{a}\}$
	be one individual with the second smallest value in $C_{j}^{a}$.
	We replace the value of $I_{min_{j}^{a}}$ for $A^{a}$ by the value of $I_{min_{j}^{'a}}$, that is,
	$x_{min_{j}^{a}}^{a}=x_{min_{j}^{'a}}^{a}$.
	\item Let $I_{max_{j}^{a}}=\arg\max_{I_{i}\in C_{j}^{a}}\{x_{i}^{a}\}$
	be one individual with the largest value for $A^{a}$ in $C_{j}^{a}$, and let
	$I_{max_{j}^{'a}}=\arg\max_{I_{i}\in C_{j}^{a}\setminus{I_{max_{j}^{a}}}}\{x_{i}^{a}\}$
	be one individual with the second largest value in $C_{j}^{a}$. We
	replace the value of $I_{max_{j}^{a}}$ for $A^{a}$ by the value of $I_{max_{j}^{'a}}$, that is,
	$x_{max_{j}^{a}}^{a}=x_{max_{j}^{'a}}^{a}$.
\end{itemize}

Let us call $D'$ the data set that results from applying this
pre-processing step to $D$. The purpose of the pre-processing is to
make sure that modification of a single record of $D$ 
keeps the values of cluster $C_{j}^{a}$ in $D'$ within the range
$[x_{min_{j}^{a}}^{a},x_{max_{j}^{a}}^{a}].$ Indeed, if we modify
a record to a value smaller than $x_{min_{j}^{a}}^{a}$, resp. larger than $x_{max_{j}^{a}}^{a}$
(which causes the modified value to become the smallest, 
resp. the largest value in $C_{j}^{a}$) 
the pre-processing step will automatically replace the modified
value by $x_{min_{j}^{a}}^{a}$, resp.
$x_{max_{j}^{a}}^{a}$ 
(that is, the former smallest value, resp. largest value, which is 
now the second smallest value, resp. second largest value).

Composing the pre-processing step with the microaggregation
can be viewed as an alternative microaggregation algorithm, which we expect to
be substantially less sensitive to changes of the records in $D$. 
To attain iDP
we need to adjust the noise to the sensitivity of this alternative
microaggregation algorithm. Such a sensitivity is computed in the following proposition.

\begin{proposition}
	\label{prop:sens_second}Let $C^{a}=\{C_{1}^{a},\ldots,C_{p}^{a}\}$
	be a clustering of records created w.r.t. the values of attribute 
	$A^{a}$ 
	and assume each cluster contains at least three values (not necessarily
	different). 
	Let $P^{a}=\{P_{1}^{a},\ldots,P_{p}^{a}\}$ be the clustering that results
	from replacing the largest and the smallest values of attribute $A^{a}$ in
	each cluster by the second largest and the second smallest value in the cluster
	(that is, the clusters that result from applying the previously 
	described pre-processing step to $C^a$).
	Let $p_{j}^{a}=\frac{1}{|P_{j}^{a}|}\sum_{x\in P_{j}^{a}}x^{a}$
	, for $j=1,\ldots,p,$ be the centroid associated with 
	cluster $P_{j}^{a}$,
	where $x^{a}$ is the value of the record $x$ for the attribute $A^{a}$.
	The local sensitivity for each centroid, $LS_{p_{j}^{a}}(D)$, is
	
	\[	\max\{|x_{max^{a}_j}-x_{min^{'a}_j}|+|x_{min^{''a}_j}-x_{min^{'a}_j}|+
	|x_{max^a_j} - x_{max^{'a}_j}|, \]
	\begin{equation}	
		|x_{min^a_j}-x_{max^{'a}_j}|+|x_{max^{''a}_j}-x_{max^{'a}_j}|+
		|x_{min^a_j} - x_{min^{'a}_j}|\}
		\label{eq:local_sens3}
	\end{equation}
	divided by $|P_{j}^{a}|$, where $x_{min^{''a}_j}$ and $x_{max^{''a}_j}$
	are, respectively, the third smallest and the third largest values in 
	the cluster.
\end{proposition}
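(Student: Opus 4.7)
My plan is to bound $LS_{p_j^a}(D)$ by tracking, under a single-record change in $D$, the variation of the preprocessed cluster sum $\sigma_j^a = \sum_{x \in P_j^a} x^a$, and then to divide by $|P_j^a|$. A direct reading of the preprocessing gives the identity
\[
\sigma_j^a = \sum_{x \in C_j^a} x^a - x_{min^a_j} + x_{min^{'a}_j} - x_{max^a_j} + x_{max^{'a}_j},
\]
so any modification that does not touch $C_j^a$ leaves $p_j^a$ unchanged, and I only need to study modifications of one record of $C_j^a$, say from some value $y$ to some new value $z$.

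The second step is a case analysis on which record is modified and on where $z$ lies relative to the current extremes. The key observation that collapses most subcases is that the preprocessing \emph{absorbs} changes that keep a record extremal in the same direction: if the record realizing $x_{max^a_j}$ acquires an even larger new value (or the one realizing $x_{min^a_j}$ an even smaller one), the preprocessing still overwrites it with $x_{max^{'a}_j}$ (respectively $x_{min^{'a}_j}$), so $\sigma_j^a$ does not move at all. The modifications that genuinely shift $\sigma_j^a$ are those that \emph{flip} a value to the opposite side of the cluster, and two symmetric extremal configurations maximize the shift: (i) sending a record realizing $x_{min^a_j}$ or $x_{min^{'a}_j}$ to some $z > x_{max^a_j}$, and (ii), symmetrically, sending a record realizing $x_{max^a_j}$ or $x_{max^{'a}_j}$ to some $z < x_{min^a_j}$.

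For configuration (i) I would recompute $\sigma_j^a$ on the modified cluster: $z$ is the new maximum and is overwritten by the old maximum $x_{max^a_j}$ (which is now the second maximum), while $x_{min^{'a}_j}$ is the new minimum and is overwritten by the new second minimum $x_{min^{''a}_j}$. The two appearances of $z$ cancel, and a direct manipulation gives
\[
|\Delta \sigma_j^a| = 2x_{max^a_j} - 2x_{min^{'a}_j} + x_{min^{''a}_j} - x_{max^{'a}_j},
\]
which, after regrouping into the three nonnegative telescoping differences $x_{max^a_j}-x_{min^{'a}_j}$, $x_{min^{''a}_j}-x_{min^{'a}_j}$ and $x_{max^a_j}-x_{max^{'a}_j}$ and restoring absolute values, is exactly the first expression inside the max of the statement. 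Configuration (ii) is handled in a symmetric fashion and yields the second expression. It then remains to check that every intermediate case (a middle record modified, or $z$ landing inside $[x_{min^a_j}, x_{max^a_j}]$) produces a shift of $\sigma_j^a$ dominated by one of these two extremal values; taking the maximum and dividing by $|P_j^a|$ gives the claimed local sensitivity.

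The main obstacle I expect is the bookkeeping in the case analysis: because the preprocessing depends on the identities of the records realizing the extremes, a single modification can simultaneously shift what counts as first, second, and third smallest (or largest), and each configuration must be unfolded carefully to keep track of these shifts. The hypothesis that every cluster has at least three (not necessarily distinct) values is used precisely here, to guarantee that $x_{min^{''a}_j}$ and $x_{max^{''a}_j}$ exist and that the closed-form maximum from the two extremal configurations is well defined.
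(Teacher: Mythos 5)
Your proposal is correct and follows essentially the same route as the paper's proof: identify the extremal single-record changes (an extreme value flipping to the opposite side of the cluster), compute the resulting change in the pre-processed cluster sum to obtain the two symmetric expressions, and take their maximum divided by $|P_j^a|$. Your explicit bookkeeping via the sum $\sigma_j^a$ and the observation that same-direction extremal changes are absorbed by the pre-processing make the case analysis slightly more transparent, but the argument is the same.
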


\begin{proof}
	Let us assume the maximum possible change of a record.
	This occurs when the smallest value becomes greater than the largest value so far,
	or when the largest value becomes smaller than the smallest value so far.
	Let us begin with the first case.
	
	Assume $x_{min^a_j}$ changes to a value 
	greater than the largest value $x_{max^a_j}$.
	Initially, the pre-processed value was $x_{min^{'a}_j}$ and it becomes
	$x_{max^{a}_j}$ after the change (as $x_{max^{a}_j}$ is the second largest value of the cluster
	in the modified data set). Additionally,
	after pre-processing, the former second smallest value
	$x_{min^{'a}_j}$ is changed to the third smallest value $x_{min^{''a}_j}$ and $x_{max^{'a}_j}$
	becomes $x_{max^a_j}$. Hence, the change in the sum of pre-processed cluster values is
	\begin{equation}
		\label{part1}
		|x_{max^{a}_j}-x_{min^{'a}_j}|+|x_{min^{''a}_j}-x_{min^{'a}_j}|+|x_{max^a_j} - x_{max^{'a}_j}|.
	\end{equation}
	
	Symmetrically, if $x_{max^a_j}$ changes to a value less than
	the smallest value $x_{min^a_j}$, the change in the sum of pre-processed cluster values is
	\begin{equation}
		\label{part2}
		|x_{min^a_j}-x_{max^{'a}_j}|+|x_{max^{''a}_j}-x_{max^{'a}_j}|+|x_{min^a_j} - x_{min^{'a}_j}|.
	\end{equation}
	
	Let us now consider changes in the second smallest or second largest values,
	which are also relevant for the range of the pre-processed cluster.
	If the second smallest record changes maximally, that is, becomes
	greater than the largest value so far, the pre-processed cluster
	will have values in the range $[x_{min^{''a}_j},x_{max^a_j}]$. Hence,
	the change in the sum of pre-processed cluster 
	values is the same as in Expression (\ref{part1}). Symmetrically,
	if the second largest value changes maximally, the change in the sum of values is 
	the same as in Expression (\ref{part2}).
	
	Thus, the maximum change in the sum of pre-processed cluster values is the maximum of Expressions (\ref{part1}) 
	and (\ref{part2}). To obtain the sensitivity, we must divide
	by the cardinality $|P_{j}^{a}|$ of the cluster. This concludes
	the proof.
\end{proof}

In this way, if we add the pre-processing step
to the approach proposed in Section~\ref{subsec:first}, 
the local
sensitivity of each centroid can be computed from 
the values in the corresponding cluster.
Hence, we do not need the attributes to be bounded
and we obtain smaller local sensitivities
--because the cluster value range is usually narrower than
the domain value range--.

By using the approach described in Section~\ref{subsec:first} on 
the pre-processed data set $(D')$ and computing the local sensitivity
as specified in Expression (\ref{eq:local_sens3}), we generate
an $\epsilon$-iDP data set. 

\section{Experimental Evaluation\label{sec:experiments}}

This section reports the empirical evaluation of the two microaggregation strategies 
we propose for generating iDP data sets.
Two well-known data sets have been used as evaluation data:

\begin{itemize}
	\item \emph{Census}: This is a test data set extracted from the 
	U.S. Census Bureau 1995 Current Population Survey~\cite{Brand}. 
	It contains 1,080 records with numerical attributes and it has been widely used for evaluating privacy-preserving methods~\cite{Soria2014,Sanchez2016,Lasz05}. 
	The following (unbounded) attributes about finance and taxes have been used in our experiments: AFNLWGT, AGI, EMCONTRB, FEDTAX, STATETAX, TAXINC, POTHVAL, INTVAL, and FICA. 
	\item \emph{Wine Quality}: This is a double data set for 
	classification/regression tasks found in UCI~\cite{Wine}.
	There is a data set  
	related to red wines and another related to white wines.
	We used the white wine dataset, 
	which contains significantly more instances than
	the red wine data set (4,898 vs. 1,600). 
	Attributes are discrete 
	or numerical, and they describe 
	physicochemical properties of wine. 
\end{itemize}

All the attributes of the two data sets we used represent 
non-negative numerical magnitudes and most of them are unbounded.
To compute the (global and local) sensitivities, we fixed the attribute
domains as 
\[ [0, \ldots, (\alpha \times max\_attr.\_value\_in\_data set)],\]
and took $\alpha\in \{1.5,3.0\}$. By varying $\alpha$ we tested the influence of the
size of the attribute domains on the sensitivity that applies for each method.
Since the Laplace distribution takes values in the range $(-\infty, +\infty)$, 
for consistency we bounded noise-added outputs to the domain ranges defined above.

Two well-differentiated evaluation experiments were carried out. 
In the first one, we measured the information loss 
incurred by our methods w.r.t. that of standard DP with general metrics. 
In the second experiment, we assessed the utility retained by the 
masked data produced by our methods when employed in machine learning tasks.

\subsection{Information Loss Evaluation}

Information loss refers to the difference between the masked and the original data, that is,
to the harm inflicted by masking to the analytical accuracy of the original data.
The Sum of Squared Errors (SSE) is a standard measure of information 
loss~\cite{SDC}, and it is 
defined as the sum of squares of distances between original and masked records:

\begin{equation}
	SSE = \sum dist(r_i, (r_i)')^2 ,
\end{equation}

where $r_i$ is the $i$-th record in the original data set and $(r_i)'$ represents its masked version. 
The mean SSE, calculated by dividing the SSE by the number of records $n$, is 
usually more informative 
because it does not depend on the cardinality of the data set.

To compute SSE, we need a distance between records:

\[	d((x_1,...,x_m), (y_1,...,y_m))= \]
\begin{equation}
	=(1/m)\sqrt{(d_1(x_1,y_1)/\sigma_1^2)^2 + \ldots + (d_m(x_m,y_m)/\sigma_m^2)^2},
\end{equation}

where $d_j(\cdot,\cdot)$ is the distance between values of the $j$-th attribute,
$\sigma_j^2$ is the sample variance of the $j$-th attribute in the original data set and
$m$ is the number of attributes.

For differential privacy, we have considered $\epsilon$ values 0.01, 0.1 and 1.0, 
which cover the range of reasonably safe values~\cite{Dwork2011}.
In all the DP/iDP methods discussed/proposed in this paper, the sensitivity of the 
centroids obtained after the microaggregation step depends on the parameter $k$ that 
establishes the minimum number of records in each cluster; 
that is, it defines the cardinality $|C|$ of the cluster associated with the centroid.
In order to evaluate the influence of the microaggregation step, 
and considering the cardinality of the
\emph{Census} and \emph{Wine Quality} data sets, we have taken parameter $k$ 
between 3 and 100 for the former data set and between 3 and 400 for the latter;
notice that $k=3$ is the minimum microaggregation level supported by the method
using cluster-based local sensitivity.

To put in context the results obtained with our methods 
--iDP using microaggregation and local sensitivity (\emph{iDP-LS}) and
iDP using microaggregation and cluster-based local sensitivity (\emph{iDP-CBLS})--, 
we compared them with 
the following alternatives:
\begin{itemize} 
	\item Plain Laplace noise addition for $\epsilon$-differential privacy (\emph{DP}) 
	with no prior microaggregation, 
	as described in Section~\ref{subsec:dp}. 
	This is the naive mechanism to produce differentially private data sets. 
	We used it as an upper bound for the information loss.
	\item Differential privacy using univariate microaggregation (\emph{DP-UM}), as 
	described in Section~\ref{subsec:ir_dp}. 
	This mechanism uses the prior microaggregation step to reduce the sensitivity, 
	even though the noise applied to the 
	centroids needs to be adjusted to the global sensitivity.
\end{itemize}

For each data set, each above-mentioned method and each 
method parameterization (in terms of $\alpha$, $k$ and $\epsilon$ values), we 
made 10 runs and computed the mean SSE over them. The result is depicted in
Figures~\ref{fig_SSE_CASC} and~\ref{fig_SSE_Wine}.
Notice that the plain Laplace noise addition method (\emph{DP}) is displayed as a 
horizontal line, because it entails no microaggregation and is hence independent of $k$.
Also, since SSE values are quite diverse among the different methods, 
a $\log_{10}$ scale has been used in the ordinates.

\begin{figure}
	\begin{center}
		\includegraphics[scale=0.34]{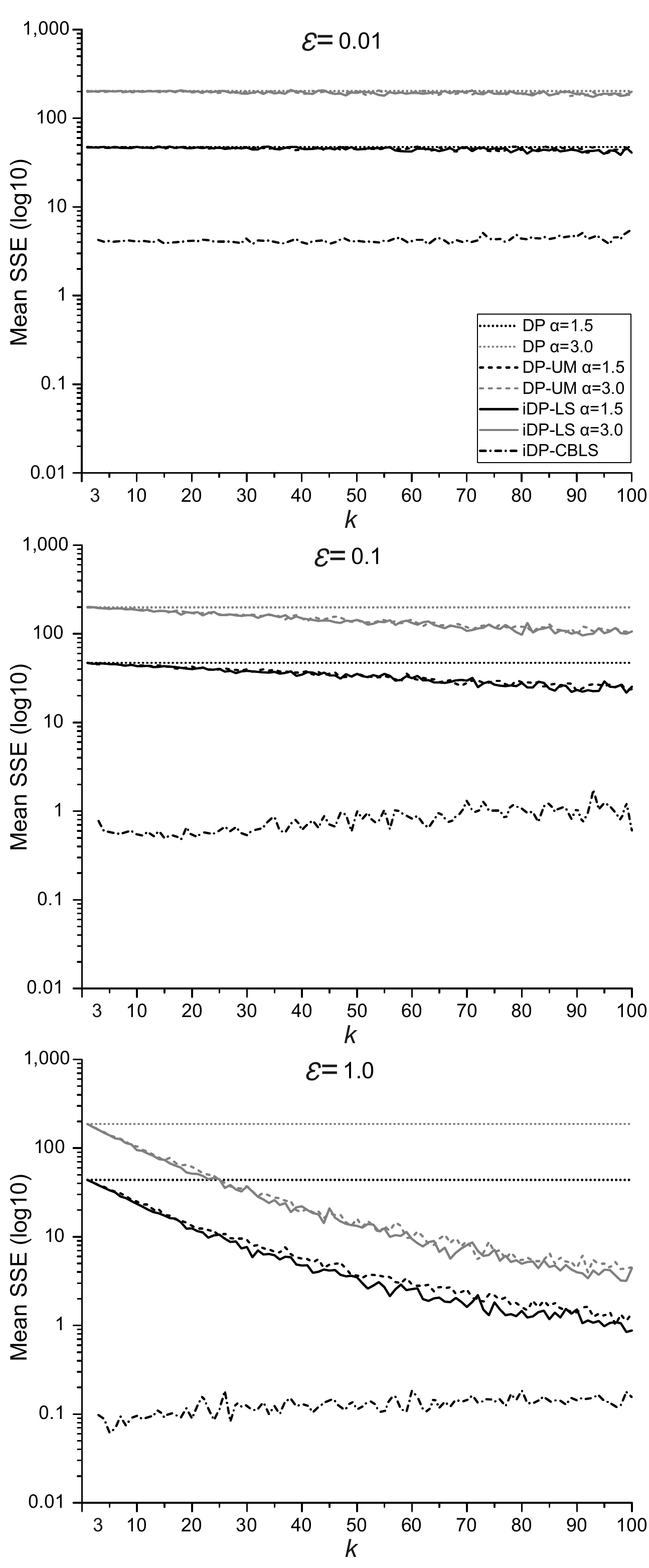}
		\caption{\emph{Census} data set: mean SSE for the proposed methods (\emph{iDP-LS} and \emph{iDP-CBLS}) and baselines (\emph{DP-UM} and \emph{DP})
			with $\epsilon=0.01$ (top), $\epsilon=0.1$ (center) and $\epsilon=1.0$ (bottom), and 
			$\alpha\in \{1.5,3.0\}$,
			for microaggregation parameter $k$ from 3 to 100}
		\label{fig_SSE_CASC}
	\end{center}
\end{figure}

\begin{figure}
	\begin{center}
		\includegraphics[scale=0.34]{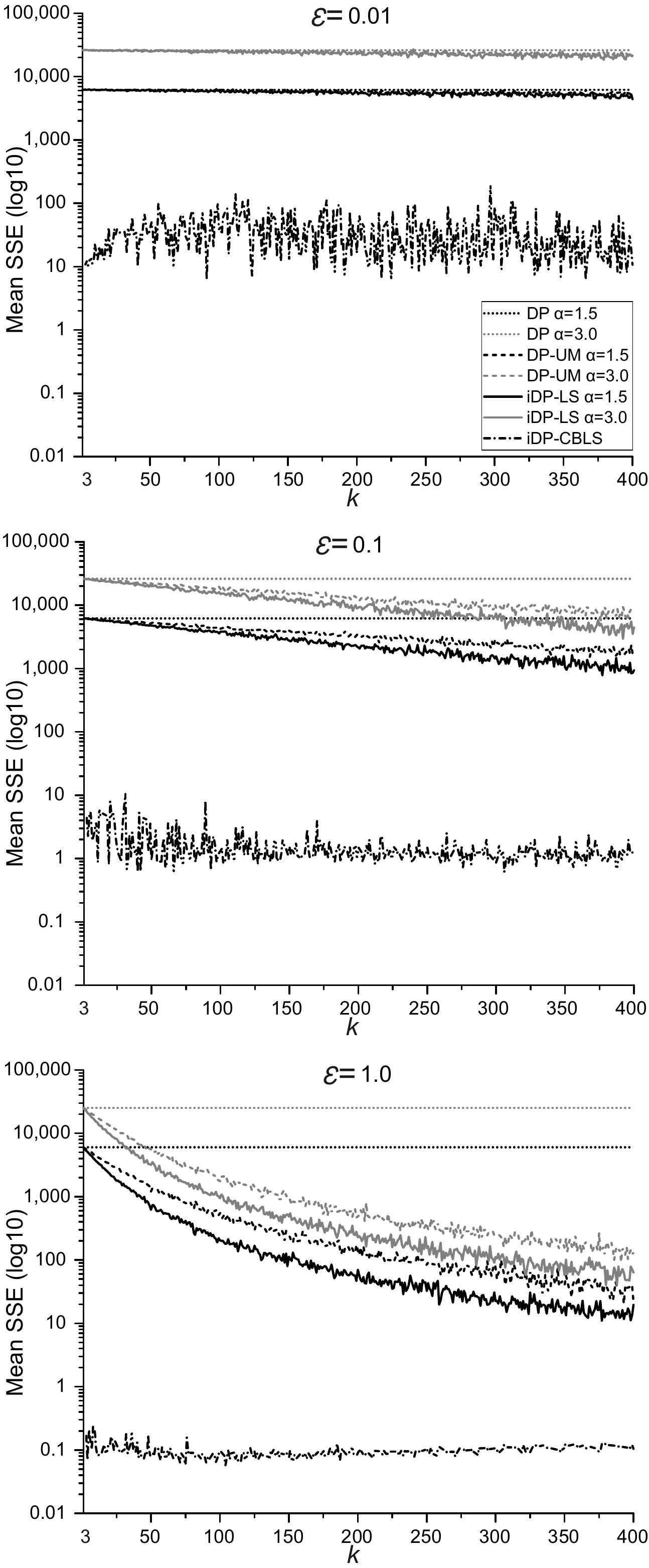}
		\caption{\emph{Wine Quality} data set: mean SSE for the proposed methods (\emph{iDP-LS} and \emph{iDP-CBLS}) and baselines (\emph{DP-UM} and \emph{DP})
			with $\epsilon=0.01$ (top), $\epsilon=0.1$ (center) and $\epsilon=1.0$ (bottom), and $\alpha\in \{1.5,3.0\}$,
			for microaggregation parameter $k$ from 3 to 400}
		\label{fig_SSE_Wine}
	\end{center}
\end{figure}

The results obtained for the two data sets show that:
\begin{itemize}
	\item Plain Laplace noise addition ({\em DP}), with no prior microaggregation, results in the highest SSE due to the large global sensitivity.
	In fact, SSE barely decreases when moving from $\epsilon=0.01$ to $\epsilon=1.0$ and, thus, we may consider the 
	masked data nearly random in all cases.
	\item Both {\em DP-UM} and {\em iDP-LS} yield similar results, with SSE decreasing 
	as the microaggregation level ($k$) increases. This shows the benefits
	of the prior microaggregation step to reduce the sensitivity. {\em iDP-LS} 
	achieves lower SSE thanks to its ability to use the local sensitivity
	instead of the global sensitivity employed by {\em DP-UM}. The differences 
	between the two methods are larger for higher values of $\epsilon$,
	but stay proportional when varying the boundaries of the attribute domains ($\alpha$). 
	In this respect, even though {\em iDP-LS} uses the local sensitivity,
	it still depends on the domain boundaries, as stated in Proposition~\ref{prop:sens_first}. 
	\item For the three previous methods, SSE values 
	decrease as the attribute domains get smaller (smaller $\alpha$). 
	This illustrates that large (or even unbounded)
	domains severely deteriorate the utility of the masked data when 
	using the straightforward approach to generate differentially private data sets. 
	\item Our most sophisticated strategy ({\em iDP-CBLS}) is able to improve the accuracy of 
	the previous methods by several orders of magnitude.
	In fact, its SSE for $\epsilon=0.01$ is similar to the ones 
	of {\em DP-UM} and {\em iDP-LS} for $\epsilon=1.0$ and large $k$, which shows that {\em iDP-CBLS}
	can accommodate stronger privacy requirements (smaller $\epsilon$).
	Moreover, since the sensitivity calculation is based on clusters rather than on attribute domains (see Proposition~\ref{prop:sens_second}), it is not affected by
	(large) domains. Also, unlike the former methods, {\em iDP-CBLS} 
	requires very small microaggregation levels ($k$ among 5-15) to obtain optimal results.
	Larger microaggregation levels produce a small SSE increase 
	because the distortion caused by microaggregation is greater (for large $k$) 
	than the benefits resulting from reducing the sensitivity 
	(the cluster-based sensitivity is already quite small).  
\end{itemize}

\subsection{Evaluation in Data Classification}

The second set of experiments aims at evaluating the utility 
retained by the masked outcomes in a specific machine learning task, namely data classification. This scenario is especially relevant because DP has also been adopted as the \emph{de facto} standard for privacy protection in machine learning and, like in data releases, researchers struggle to reconcile meaningful DP guarantees with usable model accuracy~\cite{review}.

We built a classification model 
using masked training data, and we compared its 
classification accuracy with that of a model built on original training data.
The same original data were used in both cases as evaluation data to 
measure the classification accuracy.
For both the {\em Census} and the {\em Wine Quality} data sets, 
we used the first 66\% of (masked, resp. original) records for training 
and the rest of (original) records for evaluation. 
As in the former experiments, we report here the average result of 
10 runs for the same parameter values ($\epsilon$, $k$ and $\alpha$) and for the
two methods we propose: {\em iDP-LS} and {\em iDP-CBLS}.

The classifier we chose is Random Forest, which is fast and easy 
to implement, produces highly accurate predictions and can handle a 
very large number of input variables without overfitting~\cite{Biau12}.

To measure the classification accuracy, we focused on the F-measure of 
the class attribute,
which is the harmonic mean between precision 
and recall. 

\subsubsection{\emph{Census} data set}

Since {\em Census} is a general purpose data set (not specifically 
aimed at classification), we had to adopt ERNVAL 
(business or farm net earnings in the year of reference) as class attribute
by categorizing it into ``$\leq$30K'' and ``>30K'' if the balance 
is less/equal than 30,000 and greater than 30,000, respectively. 
Thus, the classification objective is to predict whether an individual obtains 
earnings below or above 30,000.

Fig.~\ref{fig:CASC} depicts the F-measure for both classes with original 
training data and with the masked training data produced by our methods 
for the different parameters.
In general, the results are coherent with the SSE figures 
reported above. On the one hand, the results of {\em iDP-LS} improve proportionally 
to the microaggregation level $k$ and
are better for narrower attribute domains (which result in a smaller sensitivity)
with $\epsilon>0.01$.
For such values of $\epsilon$, F-measures remain quite stable from $k=50$ onwards, 
and the best results are around 20\%, 10\% and 5\% worse
for $\epsilon=0.01$, $\epsilon=0.1$ and $\epsilon=1.0$, respectively, 
than the upper bound defined by the original training data. 
On the other hand, {\em iDP-CBLS} provides much better results, that 
for $\epsilon=1.0$ are identical to the upper bound,
and for $\epsilon=0.1$ and $\epsilon=0.01$
only 3\% and 10\% worse, respectively. 
In this scenario, the utility retained by {\em iDP-CBLS} for $\epsilon=0.01$ 
is similar to that of {\em iDP-LS} for $\epsilon=0.1$.
As we also observed in the former experiments, F-measures for {\em iDP-CBLS} degrade 
for large $k$ because the distortion 
added by the microaggregation step becomes comparatively larger 
than the reduction of sensitivity (and thus than the reduction 
of the noise, which is already very small).

\begin{figure}
	\begin{center}
		\includegraphics[width=\linewidth]{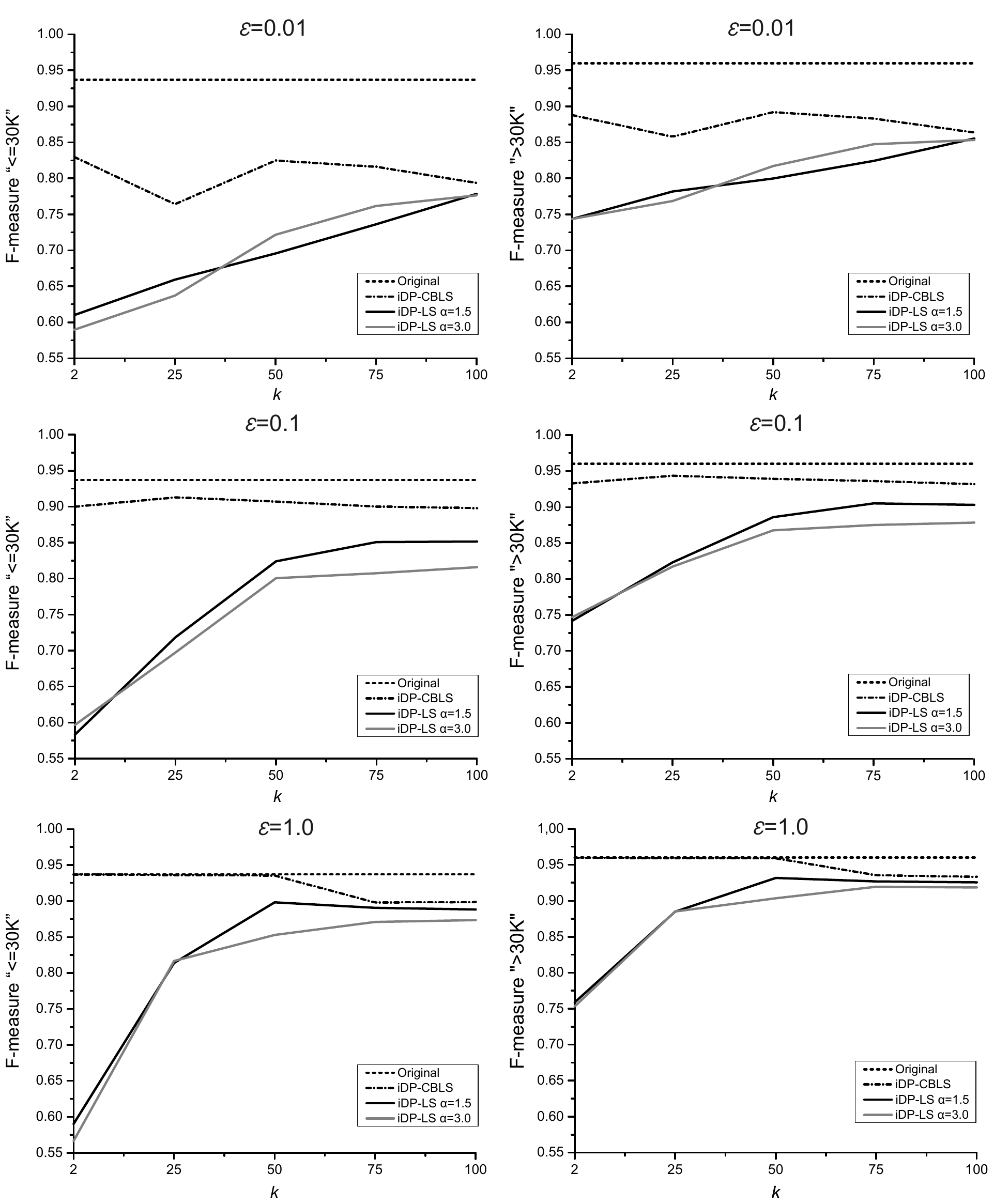}
		\caption{\emph{Census} data set: F-measures for classes ``$\leq$30K'' (left) and ``>30K'' (right) with original training data and masked training data for different parameters}
		\label{fig:CASC}
	\end{center}
\end{figure}

\subsubsection{\emph{Wine Quality} data set}

In the \emph{Wine Quality} data set, the class attribute is QUALITY, 
ranging from 0 to 10, where 0 is the worst quality score and 10 the best.
To simplify the classification, we considered two classes: ``Not Excellent'' 
and ``Excellent'', which correspond to QUALITY values $\leq$ 6 or $>$ 6 respectively. 

Fig.~\ref{fig:WINE} depicts the F-measures for these two classes with 
the same methods and parameters as above. 
The tendencies observed in the results are similar to those for \emph{Census}, 
but with some differences. On the one hand, the F-measures for the two 
classes are significantly different
because the classes are unbalanced: ``Not Excellent'' has 3.5 times 
more records than ``Excellent''; thus, the latter class is more difficult 
to classify due to the smaller
amount of training data, and results in lower F-measures. On the 
other hand, {\em iDP-CBLS} for $\epsilon=0.1$ and $\epsilon=1.0$ was able
not only to reach the upper bound, but also to
slightly improve it for some values of $k$. This phenomenon 
can be explained by the fact that adding a small amount of noise to the training data, 
as is the case for {\em iDP-CBLS}, may improve the 
classification accuracy as long as the distribution of the data remains similar~\cite{vandermaaten13}. Thus, we can observe that {\em iDP-CBLS} was not only able to maintain a low 
SSE, but
the (small) noise it added had even a positive effect on data classification in some cases.

\begin{figure}
	\begin{center}
		\includegraphics[width=\linewidth]{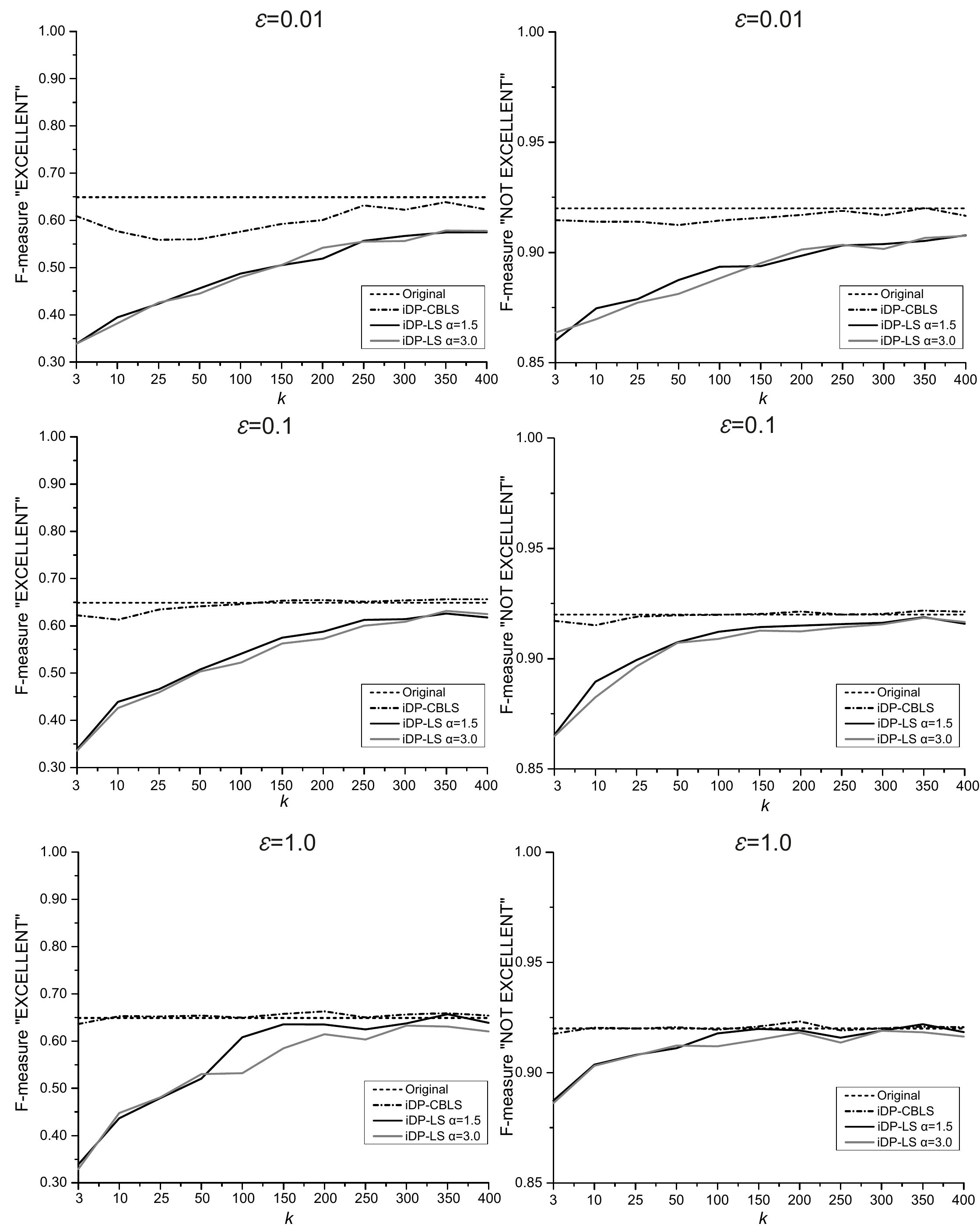}
		\caption{\emph{Wine Quality} data set: F-measures for classes ``Excellent'' (left) and ``Not Excellent'' (right) with original training data and masked training data for different parameters}
		\label{fig:WINE}
	\end{center}
\end{figure}

\section{Conclusions and Future Research\label{sec:Conclusions}}

The strong privacy guarantees of DP only hold for small values of 
$\epsilon$, which usually result in
limited data accuracy in non-interactive settings, such as data releases and machine learning~\cite{limits}. 

To tackle this problem, in this paper we have leveraged the (potential) advantages of iDP regarding data utility
with the sensitivity (and noise) reduction enabled by microaggregation-based DP masking.
Two microaggregation strategies have been proposed, of which the second
(that computes the local sensitivity based on clusters) is the  
one that brings the greatest utility benefits. 
In fact, with {\em iDP-CBLS} we were able to improve the utility of the masked 
outcomes for a given $\epsilon$ by several orders of magnitude in comparison with plain DP data sets.
As shown in the experiments, for a given $\epsilon$, the 
utility preserved by {\em iDP-CBLS} rivals
the one preserved by DP data sets with 10-100 times larger $\epsilon$ values.
Therefore, our method is able to 
reconcile the small values of $\epsilon$ (say $\epsilon < 0.1$) that 
are needed for DP to offer real privacy, with
the data accuracy that is needed for data releases to be useful.
Also, for larger $\epsilon$ values, {\em iDP-CBLS} was not only able 
to keep the noise low, but
also to reach the accuracy upper bound 
for classification tasks (corresponding to the case 
in which original data are used for training). 
These results are significantly better than those obtained when applying standard DP in machine learning tasks, even with much larger $\epsilon$~\cite{review}.

As future work, we plan to design iDP-enforcing mechanisms (other than noise addition)
for non-numerical discrete attributes, which are commonly found in microdata sets.
We also plan to explore the use of multivariate microaggregation instead
of individual-ranking microaggregation. Although multivariate microaggregation
might cause more information loss, it might also guarantee that the protection
offered by {\em iDP-CBLS} can be no less than the one offered by
$k$-anonymity, no matter the value of $\epsilon$ chosen.

\section*{Acknowledgments and disclaimer}

This research was funded by the European Commission (projects H2020-871042 ``SoBigData++'' and H2020-101006879 ``MobiDataLab''), the Government of Catalonia (ICREA Acad\`emia Prizes to J.Domingo-Ferrer and to D. S\'anchez), MCIN/AEI/ 10.13039/501100011033 and ``ERDF A way of making Europe'' under grant PID2021-123637NB-I00 ``CURLING'' and INCIBE-NextGenerationEU (project ``HERMES'' and 
INCIBE-URV cybersecurity chair).
The authors are with the
UNESCO Chair in Data Privacy, but the views in this paper
are their own and are not necessarily shared by UNESCO or the 
above mentioned funding agencies.


\end{document}